\documentclass[conference]{IEEEtran}

\IEEEoverridecommandlockouts
\usepackage{cite}
\usepackage{amsmath,amssymb,amsfonts}
\usepackage{algorithmic}
\usepackage{graphicx}
\usepackage{textcomp}
\usepackage{xcolor}
\usepackage{acronym}
\usepackage{bm}
\usepackage{amsthm}
\usepackage{enumitem}

\def\BibTeX{{\rm B\kern-.05em{\sc i\kern-.025em b}\kern-.08em
    T\kern-.1667em\lower.7ex\hbox{E}\kern-.125emX}}

\renewcommand{\baselinestretch}{0.95}

\newcommand{\scaleSection}{\vspace{-0.205cm}}
\newcommand{\scaleSubsection}{\vspace{-0.155cm}}

\newcommand{\scaleSectionBelow}{\vspace{-0.175cm}}
\newcommand{\scaleSubsectionBelow}{\vspace{-0.15cm}}

\newtheorem{definition}{Definition}
\newtheorem{theorem}{Proposition}

\acrodef{pdf}{probability density function}
\acrodef{RX}{receiver}
\acrodef{SMC}{synthetic molecular communciation}
\acrodef{MC}{molecular communication}
\acrodef{VOC}{volatile organic compound}
\acrodef{TX}{transmitter}
\acrodef{ROAMT}[ROADMT]{relative observed abundance of different molecule types}
\acrodef{LC}{low-complexity}

\renewcommand{\u}{\mathbf{u}}
\newcommand{\x}{\mathbf{x}}

\newcommand{\z}{\mathbf{z}}

\newcommand{\transpose}{^\mathrm{T}}

\newcommand{\xrx}{\mathbf{x}_{\mathrm{RX}}}
\newcommand{\xtx}{\mathbf{x}_{\mathrm{TX}}}
\newcommand{\rrx}{r_{\mathrm{RX}}}
\newcommand{\rtx}{r_{\mathrm{TX}}}

\newcommand{\Ehat}[2]{\hat{\mathrm{E}}_{#1}\left\{ #2 \right\}}
\newcommand{\uturbulent}{\mathbf{u}_{\mathrm{turb}}}

\begin{document}

\title{\vspace*{-1mm}Source Distance Estimation in Turbulent Airflow: Exploiting Molecule Degradation Diversity\vspace*{-4mm}}

\author{\IEEEauthorblockN{Bastian Heinlein$^{1,2}$, Timo Jakumeit$^{1}$, Robert Schober$^{1}$, Maximilian Schäfer$^{1}$, and Vahid Jamali$^{2}$}
\IEEEauthorblockA{
\textit{$^1$ Friedrich-Alexander-Universität Erlangen-Nürnberg, Erlangen, Germany}\\
\textit{$^2$ Technical University of Darmstadt, Darmstadt, Germany}
\vspace*{-11mm}
}

}

\maketitle
\begin{abstract}
In nature, estimating the location of a molecule source in turbulent airflow is a central, and yet highly challenging problem for mate search and foraging. Recently, it has also received increasing attention in \ac{SMC}, e.g., for leakage detection. 
One important aspect of source localization is to estimate the distance to the molecule source, e.g., to determine whether it is worth to travel to a potential mating partner or food source, or to decide whether a leak is close enough for inspection. 
In this study, based on realistic simulations, we show that the diversity induced by molecule mixtures can aid source localization. In particular, when different molecule types in a mixture are subject to atmospheric degradation with different degradation rates, the relative abundance of the different species observed at the receiver enables low-complexity estimation of the source distance. Furthermore, this feature can be combined with already established concentration-based and temporal features of observed molecular signals to further increase estimation accuracy.
Thereby, we show that molecule degradation diversity of molecule mixtures can help to realize one of the important envisioned \ac{SMC} applications, namely source localization, even in turbulent airflow, opening new opportunities for the exploitation of \ac{SMC} to solve real-world problems.
\end{abstract}

\acresetall
\scaleSection\section{Introduction}\scaleSectionBelow
In recent years, \ac{SMC} has slowly moved from purely theoretical research towards experimental implementations~\cite{lotter:experimental_research_smc}. This development has been especially pronounced for air-borne \ac{SMC} as it does not require the miniaturization of components and does not depend on progress in synthetic biology or nanotechnology. 
However, most existing literature on air-borne \ac{SMC}, both experimental and theoretical, focuses on simplified, laminar flow conditions, see, e.g.,~\cite{lotter:experimental_research_smc,jamali:olfaction_inspired_MC,mcguiness:experimental_analytical_analysis_macroscale_mc_closed_boundaries,qiu:mc_link_monitoring_confined_environments}, with only a few exceptions considering more realistic turbulent flow mostly in pipes or in specific applications like the propagation of virus particles~\cite{abbaszadeh:kolmogorov_turbulence_information_dissipation_mc,abbaszadeh:mi_noise_distributions_molecular_signals_using_lif,gulec:computational_approach_characterization_airborne_pathogen_transmission,ozmen:high_speed_chemical_vapor_communication_pid}.

In contrast, most natural air-borne \ac{MC} involves turbulences, especially over longer distances. For example, female moths release pheromones, which propagate often over dozens of meters in turbulent airflow, to attract mating partners~\cite{david:finding_sex_pheromone_source_gypsy_moths}. Similarly, bees rely on olfactory cues that propagate through turbulent airflow for food search~\cite{martin:osmotropotaxis_honey_bee}. Consequently, investigating how animals leverage the stochastic and sparse olfactory cues in turbulent airflow has been of tremendous interest in various scientific fields like etymology. 
In these fields, the focus has been on scenarios with \textit{active searchers}, i.e., where an agent aims to move towards a molecule source\footnote{In the remainder of this paper, we will use the terms \textit{source} and \textit{transmitter} interchangably.}. There, approaches like \textit{infotaxis} and reinforcement learning-based strategies can explain insect behavior and have inspired applications in robotics~\cite{vergassola:infotaxis_strategy_searching_without_gradients,loisy:searching_for_source_without_gradients,vanhove:guiding_drones_information_gain}. Similarly, source localization based on static sensor networks has been investigated both theoretically and experimentally~\cite{gulec:localization_passive_source_sensor_network,balocchi:enhanced_gas_source_localization_distributed_iot_sensors}. %
One fundamental aspect for successful source localization is the estimation of the distance between a sensor and the molecule source. So far, this problem has been investigated using simulations and in simple experimental settings for sources releasing a single type of \mbox{molecule~\cite{rigolli:learning_predict_target_location_turbulent_odor_plumes,gulec:distance_estimation_methods_practical_macroscale_mc_system,gulec:fluid_dynamics_based_distance_estimation_algorithm}.}

Concurrently to the literature on source localization and source distance estimation, recent studies in the \ac{SMC} literature indicate that molecule mixtures can improve the information rate and robustness of communication compared to \ac{SMC} systems relying only on a single molecule type~\cite{jamali:olfaction_inspired_MC,araz:rskm_time_varying_MC_channels}. Although these contributions have focused on settings with predictable molecule propagation, it is natural to ask whether and how molecule mixtures can also aid source localization, especially in scenarios with more realistic turbulent airflow. 
In particular, in this paper, we propose to exploit that molecules are subject to degradation, a phenomenon already considered in the \ac{SMC} literature, e.g., for inter-symbol-interval mitigation~\cite{noel:improving_rx_performance}. 
Since different species are degraded with different rate constants as they travel through ambient air where they react with atmospheric ozone or UV light, the \ac{ROAMT} gives an indication of the travel time of the molecules. This can, in turn, be used to estimate the distance to the molecule source. 

Since the degradation rates of different molecule types vary over several orders of magnitude~\cite{williams:human_odour_thresholds_tuned_atmospheric_chemical_lifetimes}, \ac{ROAMT} is a suitable feature for distance estimation for a large class of applications. At the same time, exploiting this molecule degradation diversity is applicable to both natural and synthetic \acp{TX}: In settings like leakage detection in pipelines, additional molecules can be added to aid distance estimation. On the other hand, natural \acp{TX} often release a blend of molecules types with different degradation rates~\cite{Schuman2023}.

\begin{figure*}
    \centering
    \includegraphics[width=0.8\linewidth]{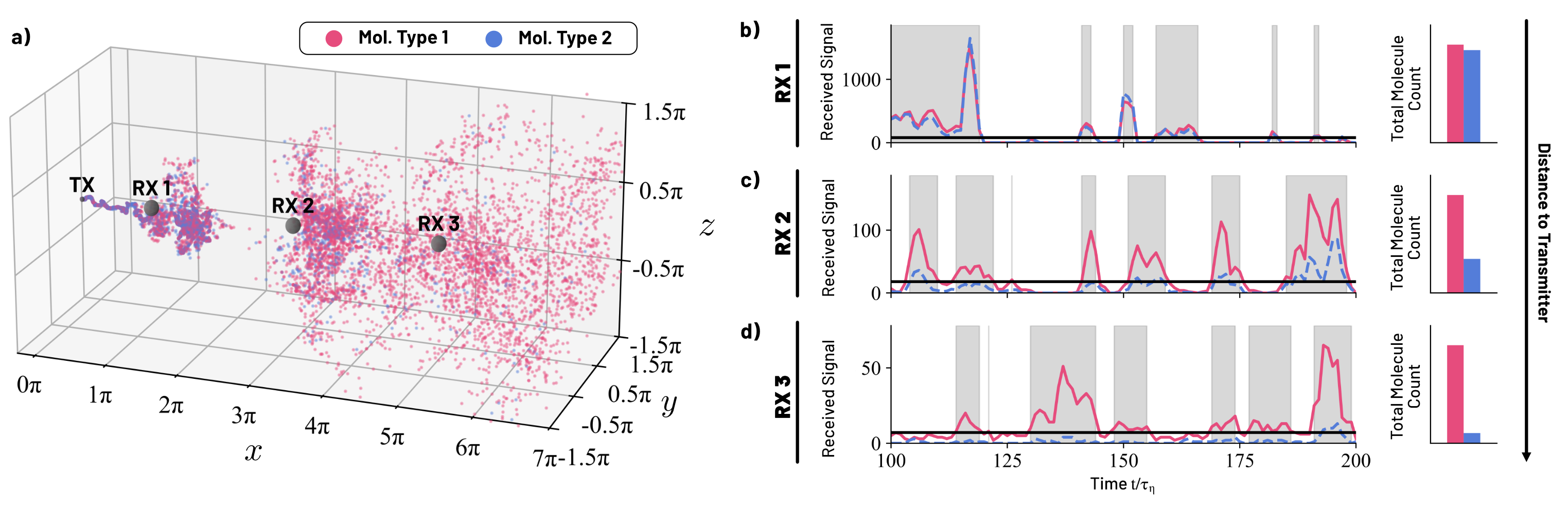}
    \vspace*{-5mm}
    \caption{\textbf{System Model.} a) Molecules released by a small spherical \ac{TX} propagate through complex turbulent airflow, in which transparent \acp{RX} are positioned. b)-d) Each transparent \ac{RX} records one time-series per molecule type, where time-spans with high molecule concentrations are called \textit{whiffs} (grey). The ratio of observed molecules at an \ac{RX} changes with the distance to the source.}
    \label{fig:system_model}
\end{figure*}
To the best of the authors' knowledge, exploiting molecule degradation diversity for source localization in turbulent conditions has not been studied in the literature before. The main contributions of this work are: 
\vspace*{-1mm}
\begin{enumerate}
    \item We propose a novel \ac{LC} source distance estimator based on the \ac{ROAMT} for the special case of two molecule types, one of which is degraded in the atmosphere. 
    \item Furthermore, we propose a data-driven estimator to estimate the source distance based on the \ac{ROAMT}. 
    Additionally, we propose to augment this estimator with the features proposed for single-species sources in~\cite{rigolli:learning_predict_target_location_turbulent_odor_plumes} for even higher estimation accuracy. 
    \item We leverage a dataset from~\cite{biferale:turb_smoke_database_lagrangian_pollutants} containing the trajectories of several millions of molecules propagating through realistic turbulent airflow to demonstrate that the proposed \ac{LC} estimator based on the \ac{ROAMT} performs on-par with data-driven schemes leveraging a single molecule type. Furthermore, we demonstrate that the data-driven detector using only the \ac{ROAMT} can outperform the single-species-based data-driven scheme and can be further improved by augmenting it with additional features of the observed signal.
\end{enumerate}

\textbf{Notation}: Vectors and matrices are denoted by lowercase and uppercase bold letters, respectively. $\nabla$ and $\Delta$ denote respectively the Nabla and the Laplace operator. $\partial_t$ denotes, depending on the context, the derivative or partial derivative with regard to time $t$. $\partial_l \, y[l]$ applies the difference operator to a discrete-time signal $y[l]$, where $l$ is the time index.
Sets are shown by calligraphic letters and the cardinality of set $\mathcal{A}$ is denoted by $|\mathcal{A}|$. 
$\Ehat{l \cdot T_\mathrm{s} \in \mathcal{T}}{y[l]}$ denotes the empirical mean during a continuous-time interval $\mathcal{T}$ of a discrete-time signal $y[l]$ which has been sampled with sampling interval $T_\mathrm{s}$. Finally, $\boldsymbol{1}\{ \cdot \}$ denotes the indicator function.

\scaleSection\section{System Model}\scaleSectionBelow\label{sec:system_model}
In this work, we study a system for air-borne \ac{MC} with turbulent flow as illustrated in Figure~\ref{fig:system_model}\footnote{Note that, in this paper, we use the dataset from~\cite{biferale:turb_smoke_database_lagrangian_pollutants}, which employs normalized length and time scales, resulting in lengths that are multiples of $\pi$ and times that are multiples of $\tau_\eta$, the Kolmogorov timescale.}. In particular, we consider a small spherical \ac{TX} which continuously releases molecules of up to two different species. The released molecules then propagate through a turbulent flow in which transparent \acp{RX} are positioned to record the observed number of molecules. In the following, we introduce the individual components of the system model in more detail. 

\scaleSubsection\subsection{Transmitter}\scaleSubsectionBelow\label{sec:system_model:tx}
We consider a small spherical \ac{TX} at position $\xtx = [0, 0, 0]\transpose$ with radius $\rtx$. Within the \ac{TX} volume, molecules are uniformly distributed and released with a flux $J$. The \ac{TX} can release up to two different molecule types with flux $J_1=p_1 \cdot J$ and $J_2 = p_2 \cdot J$, respectively. Here, $p_1$ and $p_2 = 1-p_1$ denote respectively the probability that a molecule is of type 1 or type 2. Thus, $J_1+J_2 = J$ holds in all scenarios. 

\scaleSubsection\subsection{Channel with Turbulent Flow}\scaleSubsectionBelow\label{sec:system_model:channel}
We consider a turbulent channel through which the molecules released by the \ac{TX} propagate as passive \textit{tracers}, i.e., they do not change the underlying velocity field. 
In particular, their propagation is governed by a velocity field $\u(\x,t)$, where $\x$ and $t$ denote respectively the spatial coordinates and time. 
The velocity field $\u$ can be written as the sum of a turbulent component $\uturbulent(\x,t)$ and a mean wind $\u_0 = [u, 0, 0]\transpose$~\cite{biferale:turb_smoke_database_lagrangian_pollutants}, where $u$ denotes the wind speed in $x$-direction, i.e., 
\vspace*{-2mm}
\begin{equation}
    \u(\x,t) = \uturbulent(\x,t) + \u_0. \vspace*{-2mm}
\end{equation}
The turbulent component $\uturbulent(\x,t)$ is computed via direct numerical simulation of the incompressible Navier-Stokes equations, which are given by a momentum equation and a continuity equation as follows~\cite{biferale:turb_smoke_database_lagrangian_pollutants}:
\vspace*{-1mm}
\begin{subequations}
\begin{align}
    \partial_t \uturbulent + ( \uturbulent \cdot \nabla) \uturbulent &= - \nabla p/\rho_0 + \nu \Delta \uturbulent + \mathbf{f} \\
    \nabla \cdot \uturbulent &= 0,\label{eq:navier_stokes:divergence-free}\vspace*{-4mm}
\end{align}
\end{subequations}
Here, $p$, $\rho_0$, $\nu$, and $\mathbf{f}$ denote respectively the pressure, the mean fluid density, the kinematic viscosity, and a homogeneous, isotropic forcing driving the system to a non-equilibrium state. Eq.~\eqref{eq:navier_stokes:divergence-free} enforces incompressibility. 

Because the molecules are modeled as passive tracers that are propagated by the velocity field, the trajectory of the $k$-th molecule of type $i \in \{1,2\}$ can be computed according to
\vspace*{-2mm}\begin{equation}
    \partial_t \x_{i,k}(t) = \u(\x_{i,k}(t), t)\vspace*{-2mm}
\end{equation}

Besides molecule propagation, we also consider chemical degradation: 
In the atmosphere, molecules are commonly degraded due to effects like \textit{(photo)chemical oxidation}, where molecules are oxidized due to exposure to, e.g., hydroxil radicales ($\mathrm{OH}$)~\cite{koppmann:chemistry_vocs_atmosphere,williams:human_odour_thresholds_tuned_atmospheric_chemical_lifetimes}. However, the rate constant of this degradation varies, depending on the molecule type, over multiple orders of magnitude, resulting in \textit{atmospheric life times} between seconds and hours for a given concentration of $\mathrm{OH}$~\cite{koppmann:chemistry_vocs_atmosphere,williams:human_odour_thresholds_tuned_atmospheric_chemical_lifetimes,atkinson:atmospheric_degradation_vocs}. 
The degradation itself can be modeled as a first-order reaction as follows~\cite{williams:human_odour_thresholds_tuned_atmospheric_chemical_lifetimes}:
\vspace*{-2mm}\begin{equation}
    \partial_t c_i(t) = - k_{\mathrm{deg},i} \cdot c_i(t),\vspace*{-2mm}
\end{equation}
where $c_i(t)$ and $k_{\mathrm{deg},i}$ denote respectively the concentration of molecule type $i$ and the degradation rate of molecule type $i$ for a given but fixed $\mathrm{OH}$ concentration. 
As we consider individual molecules in our system model, we focus on the probability that a molecule is degraded within the Kolmogorov time scale $\tau_\eta$, the shortest relevant time scale for turbulent flow. Formally, this degradation probability for molecule type $i \in \{1,2\}$, $p_{\mathrm{deg},i}$, relates to $k_{\mathrm{deg},i}$ approximately as follows~\cite{gillespie:exact_stochastic_simulation_coupled_chemical_reactions}
\vspace*{-2mm}\begin{equation}\label{eq:system:pdeg}
    p_{\mathrm{deg},i} = \tau_\eta \cdot k_{\mathrm{deg},i}.
\end{equation}

\scaleSubsection\subsection{Receiver}\scaleSubsectionBelow\label{sec:system_model:rx}
Since we focus on the effect of turbulent flow in this paper, we assume a transparent \ac{RX} at position $\xrx$ with radius $\rrx$, which can perfectly count the number of molecules of each type within its volume with sampling rate $T_\mathrm{s}$. Formally, the \ac{RX} acquires a time series $y_i[l]$, $i \in \{1,2\}$, according to 
\vspace*{-2mm}\begin{equation}
    y_i[l] = \sum_{k=1}^{k^{\max}_i(l \cdot T_\mathrm{s})} \boldsymbol{1}\left\{ \lVert \x_{i,k}(l \cdot T_\mathrm{s}) - \xrx \Vert_2 \leq \rrx \right\},\vspace*{-2mm}
\end{equation}
where $k^{\max}_i(t)$ denotes the number of molecules of type $i$ that exist at time $t$, $T_\mathrm{s}$ is the sampling interval, $l$ is the sample index, $\boldsymbol{1}\left\{ \cdot \right\}$ the indicator function, and $\lVert \cdot \rVert_2$ denotes the $\ell_2$ norm. Note that we assume in the following always $T_\mathrm{s} = \tau_\eta$ as the dataset used for evaluation (see Section~\ref{sec:eval:setup}) contains particle positions every $\tau_\eta$.

\textit{\textbf{Remark}: There are practical sampling devices that can perfectly distinguish different molecule types even at very low concentrations and high sampling rates. For example, Proton-transfer-reaction mass spectrometry or Raman spectroscopy enable molecule species-resolution at very high sampling rates~\cite{sieburg:monitoring_gas_composition_lab_biogas_plant,yuan:ptr-ms_applications}.
}

\scaleSection\section{Distance Estimation Schemes}\scaleSectionBelow\label{sec:schemes}
In the following, we formalize the problem of source distance estimation and highlight the challenges introduced by turbulent flow before proposing an \ac{LC} distance estimator in Section~\ref{sec:schemes:lc} and a learning-based distance estimator in Section~\ref{sec:schemes:learning}.

\begin{definition}
    We consider an \ac{RX}, as defined in Section~\ref{sec:system_model:rx}, that remains at a static, but unknown, position $\xrx$ downwind of the unknown source position $\xtx$. The \ac{RX} then aims to estimate the distance $d$ between $\xrx$ and $\xtx$, i.e., 
    \vspace*{-2mm}\begin{equation}
        d = \lVert \xtx-\xrx \rVert_2.\vspace*{-2mm}
    \end{equation}
    For the estimation task, the \ac{RX} utilizes the observed numbers of type 1 and type 2 molecules, i.e., $y_1[l]$ and $y_2[l]$, that are sampled during some observation interval $\mathcal{T} = [t_0, t_0+T]$, where $t_0$ and $T$ denote respectively the start and the duration of the observation interval in continuous time. 
\end{definition}

While the estimation of channel parameters, including the distance to the \ac{TX}, has already been investigated for channels dominated by diffusion or laminar flow (e.g., in~\cite{noel:joint_channel_parameter_estimation_diffusive_mc,sahoo:channel_parameter_estimation}), these schemes are not directly applicable to systems with turbulent flow conditions. This becomes apparent when looking at the received signals in Figure~\ref{fig:system_model}-b)-d). 
Despite the constant release rate of the \ac{TX}, the received signals at the different \acp{RX} are strongly stochastic and are not characterized by the typical impulse responses of advection-diffusion \ac{MC} channels with laminar flow.
Instead, turbulent flow causes \textit{intermittent} signals, i.e., signals where periods of strong signals (called \textit{whiffs}) and periods of weak signals (called \textit{blanks}) alternate. In Figure~\ref{fig:system_model}-b)-d), the whiffs and blanks of $y_1[l]$ are indicated respectively by gray and white background. Following~\cite{rigolli:learning_predict_target_location_turbulent_odor_plumes}, we formally define a whiff as a time period in the observation interval $\mathcal{T}$, for which
\vspace*{-3mm}\begin{equation}
    y_i[l] > y_{\mathrm{whiff},i} = \frac{1}{2} \Ehat{l \cdot T_\mathrm{s} \in \mathcal{T}}{y_i[l]} \;\;,\;\; i \in \{1,2\}\vspace*{-2mm}
\end{equation}
is valid.

\scaleSubsection\subsection{LC Degradation-enabled Distance Estimation}\scaleSubsectionBelow\label{sec:schemes:lc}
\begin{figure}
    \centering
    \includegraphics[width=\linewidth]{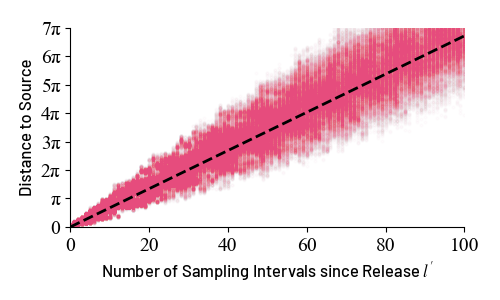}
    \vspace*{-10mm}
    \caption{\textbf{Relation between Distance to Source and Travel Time.} Plotting for each molecule its distance to the source as a function of the time since its release (red dots), given in terms of the number of sampling intervals, reveals on average a linear relationship (black dashed line) between travel time and source distance.}
    \label{fig:schemes:distance_travel_time}
\end{figure}
As described in Section~\ref{sec:system_model}, different molecule types are subject to different rates of atmospheric degradation. This difference can be exploited for source distance estimation when a source releases a mixture of molecules. 
In this work, we assume, without loss of generality, that the source releases two types of molecules, where type $1$ is not affected by atmospheric degradation while type $2$ is subject to degradation with degradation probability $p_{\mathrm{deg},2} > 0$, as defined in~\eqref{eq:system:pdeg}. Then, the expected \ac{ROAMT} $l'$ sampling intervals after molecule release is given as follows:
\vspace*{-2mm}\begin{equation}\label{eq:schemes:lc:exp_mol_ratio}
    \bar{r}(l') = \frac{p_2 \cdot (1-p_{\mathrm{deg},2})^{l'}}{p_1}.\vspace*{-2mm}
\end{equation}
This expression reveals that $\bar{r}(l')$ can be used to estimate $l'$. 
In turn, the average distance to the source, $d$, of a molecule observed $l'$ sampling intervals after release can be approximately written as 
\vspace*{-3mm}\begin{equation}\label{eq:schemes:lc:d}
    d \approx v \cdot l', \vspace*{-2mm}
\end{equation}
where $v$ is a scaling factor related to the mean velocity in the channel. This dependence is visualized in Figure~\ref{fig:schemes:distance_travel_time}, where based on a simulation of turbulent flow~\cite{biferale:turb_smoke_database_lagrangian_pollutants}, we show the distance of molecules to the source, $d$, as a function of the number of sampling intervals since their release. Clearly, this relationship can be approximated by a linear function (dashed black line). 

\begin{theorem}
Based on~\eqref{eq:schemes:lc:exp_mol_ratio} and~\eqref{eq:schemes:lc:d}, the distance between source and \ac{RX} can be estimated using the following \ac{LC} estimator:
\vspace*{-2mm}\begin{equation}\label{eq:schemes:lc:estimator}
    \hat{d}^{\mathrm{LC}}(r_{\mathrm{obs}}) = \left[v\cdot\frac{\log(r_{\mathrm{obs}}) - \log(r_0) }{\log(1-p_{\mathrm{deg},2})}\right]_0^{d_{\max}},\vspace*{-2mm}
\end{equation}
where $[\cdot]_a^b = \max(a, \min(b, \cdot))$ and $r_0=p_2/p_1$. Furthermore, $d_{\max}$ and $r_{\mathrm{obs}}$ denote respectively the largest source distance in the dataset and the empirical approximation of $\bar{r}(l')$ according to 
\vspace*{-2mm}\begin{equation}\label{eq:schemes:lc:robs}
    r_{\mathrm{obs}} = \frac{ \Ehat{l \cdot T_\mathrm{s} \in \mathcal{T}}{y_2[l]} + \epsilon }{ \Ehat{l \cdot T_\mathrm{s} \in \mathcal{T}}{y_1[l]} + \epsilon },\vspace*{-2mm}
\end{equation}
where a small $\epsilon$, e.g., $\epsilon = 10^{-2}$, is added for numerical stability. 
\end{theorem}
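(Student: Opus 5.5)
The proposition is a derivation rather than a theorem in the strict sense, so the plan is to invert the two modelling relations \eqref{eq:schemes:lc:exp_mol_ratio} and \eqref{eq:schemes:lc:d}. We then replace the unknown expectation by its empirical counterpart, and finally justify the clipping. There are three steps.

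\emph{First step: invert the ratio model.} Taking logarithms in \eqref{eq:schemes:lc:exp_mol_ratio} with $r_0=p_2/p_1$ gives $\log\bar{r}(l') = \log r_0 + l'\log(1-p_{\mathrm{deg},2})$. Since $0<p_{\mathrm{deg},2}<1$, we have $\log(1-p_{\mathrm{deg},2})<0$, so this map is strictly decreasing in $l'$ and hence invertible:
\begin{equation*}
l' = \frac{\log\bar{r}(l') - \log r_0}{\log(1-p_{\mathrm{deg},2})}.
\end{equation*}
Substituting this into \eqref{eq:schemes:lc:d} yields $d \approx v\,(\log\bar{r}-\log r_0)/\log(1-p_{\mathrm{deg},2})$. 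I would check the sign explicitly: $\bar{r}\le r_0$ makes both the numerator and the denominator nonpositive, so $d\ge 0$.

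\emph{Second step: plug in the empirical ratio.} The receiver does not know $\bar{r}$, so I would argue that $r_{\mathrm{obs}}$ in \eqref{eq:schemes:lc:robs} is a consistent plug-in estimate of it. Each $y_i[l]$ is a sum of indicators of type-$i$ molecules inside the receiver volume. A type-2 molecule released together with a type-1 molecule survives $l'$ sampling intervals with probability $(1-p_{\mathrm{deg},2})^{l'}$, independently of its trajectory, because the molecules are passive tracers and degradation does not depend on the flow. Conditioning on the travel time, the expected count ratio is therefore $p_2(1-p_{\mathrm{deg},2})^{l'}/p_1$. Averaging over $\mathcal{T}$ and taking the ratio of the time means then approximates $\bar{r}$ by a law-of-large-numbers argument. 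The regularizer $\epsilon$ only prevents $\log 0$ and division by zero during blanks, and its effect vanishes when counts are large compared with $\epsilon$.

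\emph{Third step: clip and identify the obstacle.} Since the true $d$ lies in $[0,d_{\max}]$, projecting the unclipped estimate onto this interval with $[\cdot]_0^{d_{\max}}$ can only reduce the absolute error. Noise can push $r_{\mathrm{obs}}$ above $r_0$, which would give a negative estimate, and an empty $y_2$ can produce a very large one; clipping handles both cases.

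The main obstacle is the second step. Molecules observed at the receiver have a distribution of travel times, not a single $l'$, and the ratio of averages is not the average of ratios. Moreover, $\log$ of an average is not an average of $\log$, so by Jensen's inequality the estimate is biased. I would therefore state the result as an approximation: assume the travel-time distribution at the receiver is concentrated around $d/v$, which is consistent with Figure~\ref{fig:schemes:distance_travel_time}, and treat the residual bias as part of the ``$\approx$'' in \eqref{eq:schemes:lc:d}. The remaining justification is then empirical.
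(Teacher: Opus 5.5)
Your proposal is correct and follows the paper's route. The paper likewise substitutes $r_0$ and $l' = d\,v^{-1}$ into \eqref{eq:schemes:lc:exp_mol_ratio}, sets $\bar{r}(l') \stackrel{!}{=} r_{\mathrm{obs}}$, takes logarithms, divides by $v^{-1}\log(1-p_{\mathrm{deg},2})$, and justifies the clipping by non-negativity and $d_{\max}$. The paper simply imposes $\bar{r} = r_{\mathrm{obs}}$ as the defining condition, so your law-of-large-numbers and Jensen-bias discussion is extra commentary rather than a required step.
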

\begin{proof}
    To obtain the distance estimator in \eqref{eq:schemes:lc:estimator}, we first replace $p_2/p_1$ by $r_0$ and $l'$ by $d \cdot v^{-1}$ in \eqref{eq:schemes:lc:exp_mol_ratio}, and set $\bar{r}(l') \stackrel{!}{=} r_{\mathrm{obs}}$, resulting in 
    \vspace*{-2mm}\begin{equation*}
        r_{\mathrm{obs}} \stackrel{!}{=} r_0 (1-p_{\mathrm{deg},2})^{d \cdot v^{-1}}.
    \vspace*{-2mm}\end{equation*}
    Then, we solve for $d = \hat{d}^{\mathrm{LC}}$ by first dividing by $r_0$ before taking the logarithm on both sides, yielding 
    \vspace*{-2mm}\begin{equation*}
        \log(r_{\mathrm{obs}}) - \log(r_0) = \hat{d}^{\mathrm{LC}} \cdot v^{-1} \log (1-p_{\mathrm{deg},2}).\vspace*{-2mm}
    \end{equation*}
    Dividing by $v^{-1} \log (1-p_{\mathrm{deg},2})$ yields the final result within the $\min-\max$-expression in \eqref{eq:schemes:lc:estimator}. 
    The $\min$ and $\max$ operations incorporate the prior knowledge about the largest source distance $d_\mathrm{max}$ in the dataset and the fact that there are no negative distances, respectively.
\end{proof}

\scaleSubsection\subsection{Learning-based Source Distance Estimation}\scaleSubsectionBelow\label{sec:schemes:learning}
The proposed \ac{LC} scheme in Proposition~1 exploits only the average behavior based on~\eqref{eq:schemes:lc:exp_mol_ratio} and~\eqref{eq:schemes:lc:d}, but neglects the statistical information contained in the observations. Due to the inherent complexity of turbulent flow, which determines the statistical properties of the received signal, we adopt a data-driven approach to develop a learning-based estimator. 

In particular, we propose a feed-forward neural network with two hidden layers, each with $L=64$ nodes and followed by a $\mathrm{ReLU}(\cdot)$ activation function to estimate $d$ based on a feature vector $\z$ combining $F$ different features computed from $y_1[l]$ and $y_2[l]$.
Besides $r_{\mathrm{obs}}$, we can use the average intensity of $y_1[l]$ during the observation interval, $z_1 = \Ehat{l \cdot T_\mathrm{s} \in \mathcal{T}}{y_1[l]}$, as a feature\footnote{Note that computing $z_1$ for both $y_1[l]$ and $y_2[l]$ contains the same information as computing $z_1$ for $y_1[l]$ in addition to $r_\mathrm{obs}$.} and/or the features proposed in~\cite{rigolli:learning_predict_target_location_turbulent_odor_plumes}, which we detail in the following:
\begin{description}
   \item[Average Whiff Intensity.] The average intensity of $y_1[l]$ during the whiffs, i.e., 
   \begin{equation}
        z_2 = \Ehat{l \cdot T_\mathrm{s} \in \mathcal{T}}{y_1[l] \big|y_1[l]\geq y_{\mathrm{whiff},1}}.
    \end{equation}
   \item[Average Whiff Intensity Slope.] The average magnitude of the slope of $y_1[l]$ during whiffs, i.e., 
   \begin{equation}
        z_3 = \Ehat{l \cdot T_\mathrm{s} \in \mathcal{T}}{|\partial_l y_1[l]| \,\big|y_1[l]\geq y_{\mathrm{whiff},1}}.
    \end{equation}
   \item[Average Whiff Duration.] The average duration of a whiff in the observation interval, i.e., 
    \begin{equation}
       z_4 =  \frac{1}{N_\mathrm{W}} \Ehat{l \cdot T_\mathrm{s} \in \mathcal{T}}{\boldsymbol{1}\{y_1[l]\geq y_{\mathrm{whiff}} \} },
   \end{equation}
   where $N_{\mathrm{W}}$ is the number of whiffs in the observation interval.
   \item[Average Blank Duration.] The average duration of a blank in the observation interval, i.e., 
   \begin{equation}
       z_5 = \frac{1}{N_\mathrm{B}} \Ehat{l \cdot T_\mathrm{s} \in \mathcal{T}}{\boldsymbol{1}\{y_1[l]< y_{\mathrm{whiff}} \} },
   \end{equation}
   where $N_{\mathrm{B}}$ is the number of blanks in the observation interval.
   
   \item[Intermittency Factor.] The fraction of the observation interval where whiffs occur, i.e., 
   \begin{equation}
       z_6 =  \Ehat{l \cdot T_\mathrm{s} \in \mathcal{T}}{\boldsymbol{1}\{y_1[l]\geq y_{\mathrm{whiff}} \} }.
   \end{equation}
\end{description}
We train the neural network on pairs of feature vectors and distances $d$ using the mean squared error loss and weight decay of strength $10^{-4}$ via the $\mathrm{Adam}$ optimizer with learning rate $10^{-3}$. The feature vectors can contain any subset of $\left\{r_\mathrm{obs}, z_1, \dots, z_6 \right\}$. We use a batch size of $B=200$ for at most 1000 epochs and apply early-stopping if the loss does not improve after 10 epochs.
In preliminary experiments, it proved useful to use $\log$ features in $\z$, as especially the concentration-based features can vary over two orders of magnitude, which hinders effective learning. Therefore, we used $\log$ features throughout our evaluation.

\begin{figure*}
    \centering
    \vspace*{-2mm}
    \includegraphics[width=\linewidth]{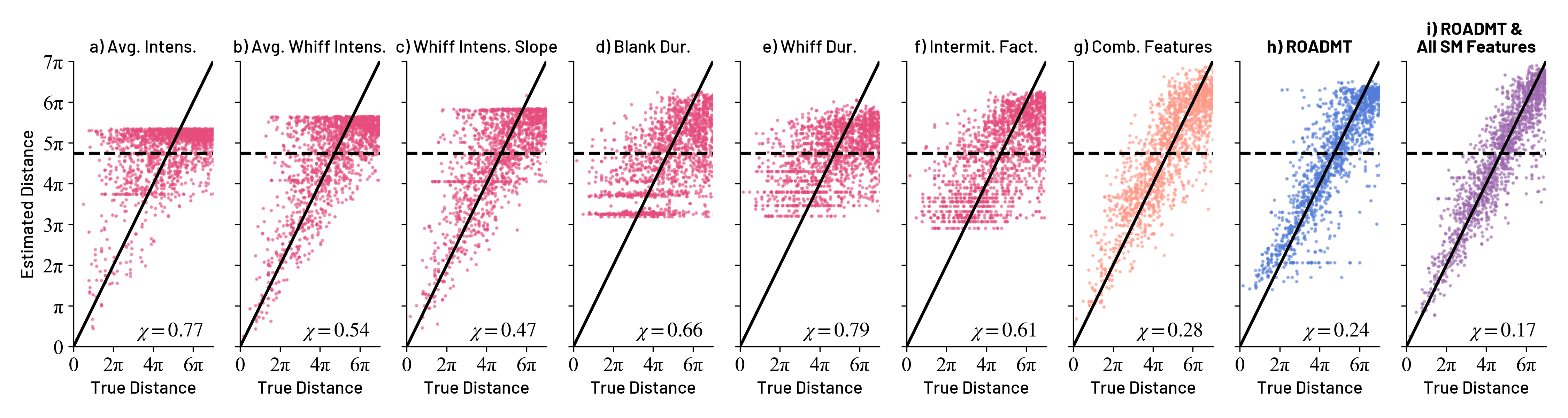}
    \vspace*{-10mm}
    \caption{\textbf{Relationship between the true and estimated distance.} We show for each test sample the estimated distance as a function of the true distance (dots) for various features and feature combinations. The solid black line indicates a perfect match between estimated and true distance while the dashed black line shows the mean distance in the test set.}
    \label{fig:eval:feature-distance-prediction}
\end{figure*}

\scaleSection\section{Performance Evaluation}\scaleSectionBelow\label{sec:eval}
\subsection{Simulation Setup}\scaleSubsectionBelow\label{sec:eval:setup}
To train and evaluate the proposed schemes, we consider two scenarios: \textbf{Scenario 1:} The \ac{TX} releases only non-degradable molecules of type $1$, i.e., $J_1 = J$. 
\textbf{Scenario 2:} The \ac{TX} releases two types of molecules with $J_1 = 0.34J$ and $J_2 = 0.66 J$. Molecules of type $1$ are not degradable and molecules of type $2$ degrade with default degradation probability $p_{\mathrm{deg},2} = 0.03$.
For both scenarios, we utilize the dataset provided in~\cite{biferale:turb_smoke_database_lagrangian_pollutants}, which contains several million tracer molecules that propagate through fully-developed, three-dimensional turbulent flow which has been simulated via direct numerical simulation. In particular, we assume the source is located at $\xtx = [0 \; 0 \; 0]\transpose$ in the cuboid simulation domain\footnote{Recall that~\cite{biferale:turb_smoke_database_lagrangian_pollutants} adopted a simulation domain with normalized length and time scales, which are given in multiples of $\pi$ and $\tau_\eta$. For a complete description of the simulation approach, including the normalized scales, we refer the reader to~\cite{biferale:turb_smoke_database_lagrangian_pollutants}.} with dimensions $[-0.25\pi, 7\pi] \times [-1.5\pi, 1.5\pi] \times [-1.5\pi, 1.5\pi]$. 
Then, we generate 1000 \acp{RX} with radius $r=0.1\pi$ that are uniformly distributed downwind of the source in the cuboid $[0, 7\pi] \times [-1.5\pi, 1.5\pi] \times [-1.5\pi, 1.5\pi]$. Subsequently, we generate 10,000 observation intervals of length $T = 100\tau_\eta$ by first choosing a random \ac{TX} and then a random start time $t_0 \in [100\tau_\eta, 800\tau_\eta]$. 
We discard any interval with zero observed molecules in Scenario~1. If a feature cannot be computed even if the observation interval contains at least one molecule, a $0$ is assigned as default value. 
Finally, we split the remaining dataset into a training set and a test set with a \mbox{$3:1$ ratio.}

\scaleSubsection\subsection{Error Metric}\scaleSubsectionBelow
To quantify the estimation quality of the different schemes reported in Section~\ref{sec:schemes}, we utilize the error metric proposed in~\cite{rigolli:learning_predict_target_location_turbulent_odor_plumes}. There, the authors proposed to use the average squared error over all test samples, normalized by the error when always guessing the mean distance $\bar{d}$ of the test set, i.e., 
\vspace*{-2mm}\begin{equation}
    \chi = \frac{ \sum_{i} (\hat{d}_i - d_i)^2 }{\sum_{i} (d_i - \bar{d})^2},\vspace*{-2mm}
\end{equation}
where $\hat{d}_i$ and $d_i$ denote respectively the estimated and the true source distance of the $i$-th sample. 
In other words, a trivial estimator guessing always $\bar{d}$ achieves $\chi=1$, while a perfect estimator achieves $\chi=0$.

\scaleSubsection\subsection{Estimation Quality}\scaleSubsectionBelow
In Figure~\ref{fig:eval:feature-distance-prediction}, we first consider the learning-based distance estimators to evaluate which features are most informative regarding the source distance. In Figure~\ref{fig:eval:feature-distance-prediction}-a)-f), we show as a baseline the estimated distance as a function of the true distance for Scenario~1. 
While all baseline features enable $\chi<1$, the estimated distances are usually close to the mean distance $\bar{d}$ (dashed black line). Also, the estimates lie, depending on the feature, only in a subset of the full range $[0, 7\pi]$, indicating that not every feature is informative for every distance. This also confirms findings from~\cite{rigolli:learning_predict_target_location_turbulent_odor_plumes} that intensity-based features are most informative at shorter distances while timing-based features are more informative at larger distances. 
On the other hand, when combining all features $z_1$-$z_6$, as in Figure~\ref{fig:eval:feature-distance-prediction}-g), the estimates become significantly more accurate compared to relying only on individual features.

Next, we consider Scenario~2, i.e., the estimates based on $r_{\mathrm{obs}}$ in Figure~\ref{fig:eval:feature-distance-prediction}-h). Clearly, in this case, the estimated distances are strongly correlated with the true distances, resulting in an overall test error of $\chi=0.24$, which is even lower than that for the combination of all features in Scenario~1, highlighting that the \ac{ROAMT} is highly informative of the source distance. 
Finally, Figure~\ref{fig:eval:feature-distance-prediction}-i) shows the estimates when combining $r_{\mathrm{obs}}$ with $z_1$-$z_6$ in Scenario~2. Combining all features results in even lower error rates, demonstrating that $r_{\mathrm{obs}}$ can be effectively augmented by other features of $y_1[l]$ that may carry information about the source distance.

\scaleSubsection\subsection{Combination of Different Features}\scaleSubsectionBelow
\begin{figure}
    \centering
    \vspace*{-6mm}
    \includegraphics[width=\linewidth]{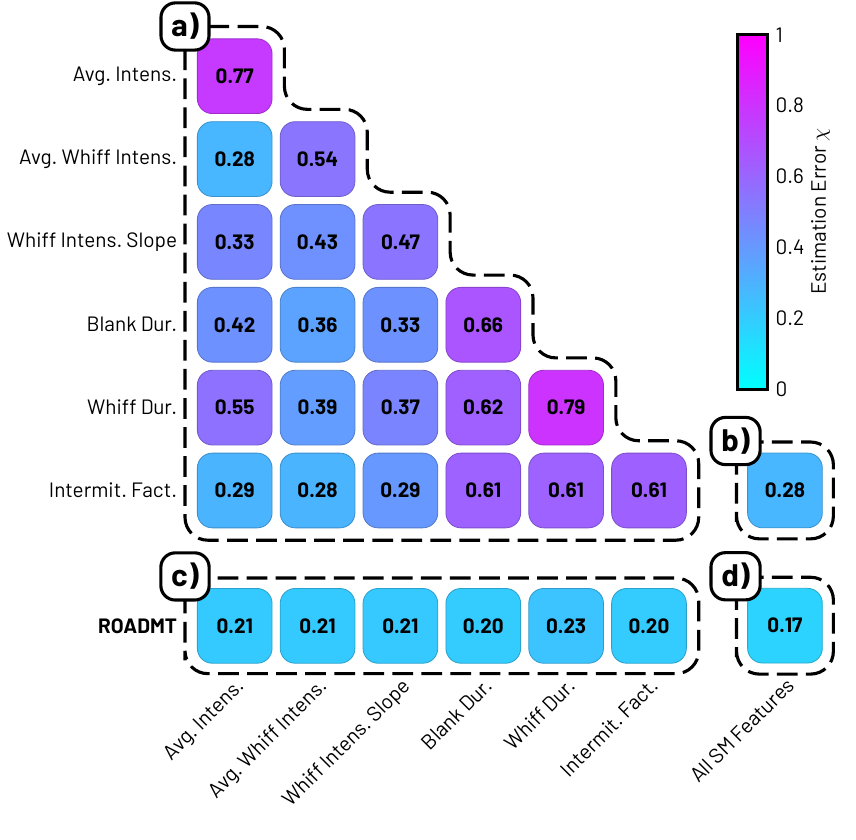}
    \vspace*{-11mm}
    \caption{\textbf{Estimation error $\chi$ for different feature combinations.} a) Combination of individuals features for Scenario~1. b) Combination of all features for Scenario~1. c) Combination of $r_{\mathrm{obs}}$ with one of the features $z_1$-$z_6$ for Scenario~2. d) Combination of $r_{\mathrm{obs}}$ with all features $z_1$-$z_6$.} %
    \label{fig:eval:feature_combinations}
\end{figure}
While Figure~\ref{fig:eval:feature-distance-prediction} revealed the predictive value of individual features and the estimation accuracy when combining all features, Figure~\ref{fig:eval:feature_combinations} investigates \textit{how} to combine individual features in a meaningful manner. 
Figure~\ref{fig:eval:feature_combinations}-a) shows the estimation error when combining two features in Scenario~1. This confirms results from~\cite{rigolli:learning_predict_target_location_turbulent_odor_plumes}\footnote{Note that~\cite{rigolli:learning_predict_target_location_turbulent_odor_plumes} employed a different dataset for their analysis.}, namely, that combining intensity (e.g., avg. whiff intensity) and timing-based (e.g., intermittency factor) features minimizes the estimation error. 
Interestingly, combining the average whiff intensity and the average intensity, a feature that was not considered in~\cite{rigolli:learning_predict_target_location_turbulent_odor_plumes}, achieves a low estimation error as well. This is due to the fact that the difference between $z_1$ and $z_2$ implicitly contains information about the intermittency factor. 
Figure~\ref{fig:eval:feature_combinations}-b) shows the estimation error when combining all features for Scenario~1, as shown in more detail in Figure~\ref{fig:eval:feature-distance-prediction}-g). As also observed in~\cite{rigolli:learning_predict_target_location_turbulent_odor_plumes}, combining more than two features in $\{z_1, \dots, z_6\}$ does not result in a performance gain. 
Figure~\ref{fig:eval:feature_combinations}-c) shows the estimation error when combining $r_\mathrm{obs}$ with one of the features $z_1$-$z_6$ in Scenario~2. Interestingly, any combination results in a lower estimation error compared to Scenario~1 and compared to relying only on $r_\mathrm{obs}$ (see Figure~\ref{fig:eval:feature-distance-prediction}-h)), confirming that $r_\mathrm{obs}$ introduces novel information into the system. 
Finally, Figure~\ref{fig:eval:feature_combinations}-d) shows that combining all features $z_1$-$z_6$ with $r_\mathrm{obs}$ leads to further improvements compared to combining only individual features $z_1, \dots,z_6$ with $r_\mathrm{obs}$.

\scaleSubsection\subsection{Estimator Comparison for Different Degradation Rates}\scaleSubsectionBelow
\begin{figure}
    \centering
    \vspace*{-5mm}
    \includegraphics[width=\linewidth]{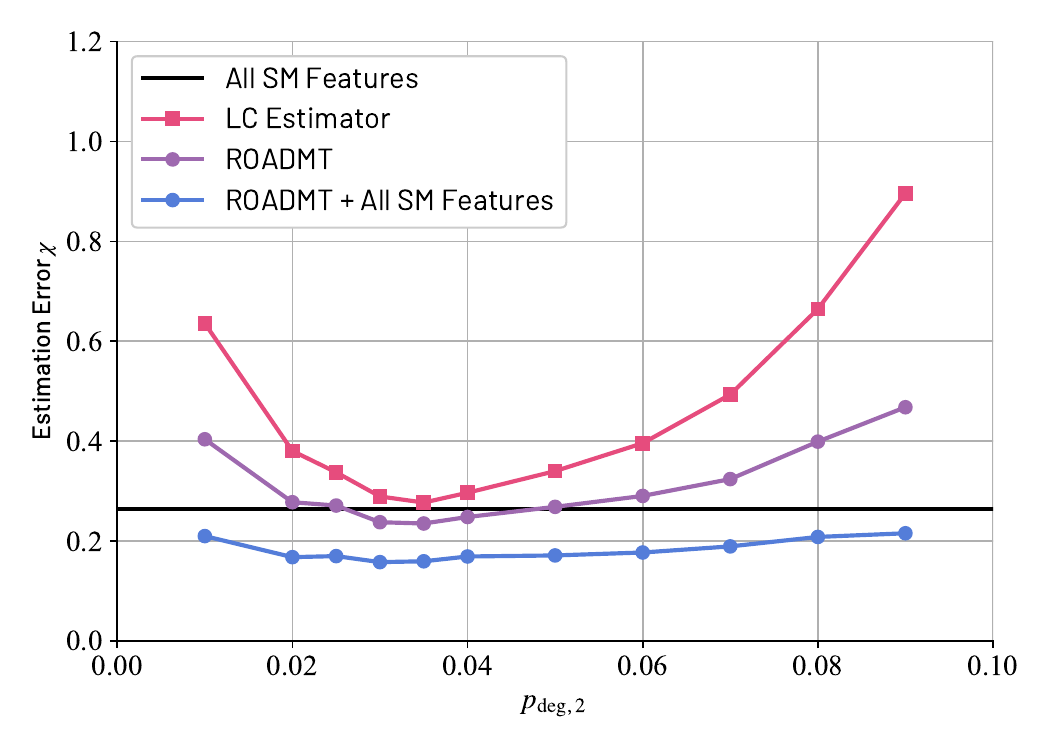}
    \vspace*{-10mm}
    \caption{\textbf{Estimation Error for Different Degradation Rates.} We show the estimation error as a function of degradation probability $p_{\mathrm{deg},2}$ for the \ac{LC} estimator (red), the learning-based scheme relying on $r_\mathrm{obs}$ only (purple), and the learning-based scheme combining $r_\mathrm{obs}$ and $z_1$-$z_6$ (blue). As baseline (black), we show the estimation error when combining $z_1$-$z_6$ for Scenario~1.}
    \label{fig:eval:deg_rates}
\end{figure}
Finally, in Figure~\ref{fig:eval:deg_rates}, we compare the proposed estimation schemes for various degradation probabilities $p_{\mathrm{deg},2}$ to the learning-based estimator relying solely on $z_1$-$z_6$ as baseline (black).
All three proposed schemes share similar trends: As $p_{\mathrm{deg},2} \rightarrow 0$, their estimation errors increase. On the other hand, if $p_{\mathrm{deg},2}$ becomes too large, their errors increase as well. This is due to the fact that for very high values of $p_{\mathrm{deg},2}$ almost no type-$2$ molecules arrive at the \ac{RX}.
While the proposed \ac{LC} scheme (red) has in general an inferior performance compared to the learning-based schemes, Figure~\ref{fig:eval:deg_rates} suggests that, for a suitable choice of the degradation rate, it can closely approach the performance of the single-molecule benchmark that requires extensive training data. 
The learning-based estimator relying only on $r_\mathrm{obs}$ (purple) can, for a suitably chosen degradation rate, outperform the baseline estimator, confirming that $r_\mathrm{obs}$ is a highly informative feature regarding the source distance. 
Finally, combining all features (blue) yields consistently the lowest estimation error over a wide range of degradation rates, demonstrating that the combination of $r_\mathrm{obs}$ with $z_1$-$z_6$ results in a highly robust distance estimation scheme.

\scaleSection\section{Conclusion}\scaleSectionBelow\label{sec:conclusion}
In this study, we investigate source distance estimation in turbulent airflow by exploiting the degradation diversity of multiple molecule types. Because different molecule types have different degradation rate constants, the \ac{ROAMT} indicates the travel time and consequently the distance to the source.
Building upon this insight, we proposed an \ac{LC} distance estimator requiring only the ratio of the number of released molecules of different types, the degradation rate constants of the molecule types, and one empirical parameter characterizing the turbulent airflow. 
In simulations with millions of molecules in realistic turbulent airflow, this \ac{ROAMT}-based estimator achieves accuracies close to those of a learning-based baseline using features of only a single molecule type without relying on a large dataset.
For applications requiring even higher estimation accuracy, we proposed a simple learning-based scheme exploiting the \ac{ROAMT}, which can be augmented by additional concentration- and timing-based features of the observations. 
Building upon these promising results, interesting directions for future research include the simultaneous estimation of the molecule release rate of the source and distance or the combination of measurements from \acp{RX} at different positions for more accurate results.

\scaleSection\section*{Acknowledgements}\scaleSectionBelow
This work was funded by the Deutsche Forschungsgemeinschaft (DFG, German Research Foundation) – GRK 2950 – Project-ID 509922606 and by the German Federal Ministry of Research, Technology and Space (BMFTR) through Project Internet of Bio-Nano-Things (IoBNT) – grant numbers 16KIS1987 and 16KIS1992.

\vspace*{-3mm}
\bibliographystyle{ieeetr}
\bibliography{literature}
\clearpage
\end{document}